\documentclass[10pt,twoside,a4paper]{article}

\usepackage{a4wide}
\usepackage{amssymb,amsmath,amsthm}
\usepackage{comment}
\usepackage{complexity}

\usepackage{enumitem}
\setlist{noitemsep}

\usepackage[algosection,algoruled,vlined,nofillcomment]{algorithm2e}
\DontPrintSemicolon

\usepackage{tikz}
\usetikzlibrary{arrows,decorations,shapes,shadows,positioning,plotmarks,patterns,calc,automata}

\usepackage{multirow}

\pagestyle{myheadings}
%\markboth{\hfill \today}{Approximation algorithms for scheduling a group of heat pumps\hfill}

\newcommand{\set}[1]{\ensuremath{\left\{#1 \right\}}}

\newcommand{\ceil}[1]{\ensuremath{\left\lceil #1 \right\rceil}}
\newcommand{\floor}[1]{\ensuremath{\left\lfloor #1 \right\rfloor}}

\renewcommand{\Re}{\mathbb{R}}
\newcommand{\Z}{\mathbb{Z}}

\newtheorem{theorem}{Theorem}[section]

\newtheorem{lemma}[theorem]{Lemma}

\tikzstyle{every path}=[>=stealth]

\tikzstyle{pool}=[cloud, draw,drop shadow,fill=white, aspect=2,minimum width=2em,minimum height=1.5em,font=\tiny]
\tikzstyle{device}=[draw,drop shadow,rounded corners,fill=white, aspect=2,minimum width=2em,minimum height=1.5em,font=\tiny]
\tikzstyle{arrow}=[font=\tiny]
\tikzstyle{desc}=[right=2em,font=\tiny]

%\tikzstyle{buffer}=[draw,drop shadow,fill=white,rounded corners,text height=1.4em,text width=1.4em,xshift=1em,yshift=-1em]
%\tikzstyle{buffer}=[draw,drop shadow,fill=white,rotate=90,minimum width=2em,minimum height=2em,cylinder]

\newcommand{\external}[1] {
	\draw[->] (#1.south)++(0.1em,0em) -- +(1em,-1em);
	\draw[->] (#1.south)++(0.2em,0em) -- +(1.5em,-1em);
	\draw[->] (#1.south)++(0.3em,0em) -- +(2em,-1em);
}

\title{Approximation algorithms for scheduling \\a group of heat pumps}
\author{Ji\v{r}\'i Fink \footnote{Supported by the Czech Science Foundation grant GA17-10090Y and by Center for Foundations of Modern Computer Science (Charles Univ. project UNCE/SCI/004). } \\ \small Department of Theoretical Computer Science and Mathematical Logic\\\small Faculty of Mathematics and Physics\\\small Charles University in Prague}
\date{}

\begin{document}
\maketitle
\begin{abstract}
This paper studies planning problems for a group of heating systems which supply the hot water demand for domestic use in houses. These systems (e.g. gas or electric boilers, heat pumps or microCHPs) use an external energy source to heat up water and store this hot water for supplying the domestic demands. The latter allows to some extent a decoupling of the heat production from the heat demand. We focus on the situation where each heating system has its own demand and buffer and the supply of the heating systems is coming from a common source. In practice, the common source may lead to a coupling of the planning for the group of heating systems. The bottleneck to supply the energy may be the capacity of the distribution system (e.g. the electricity networks or the gas network). As this has to be dimensioned for the maximal consumption, it is important to minimize the maximal peak. This planning problem is known to be \NP-hard. We present polynomial-time approximation algorithms for four variants of peak minimization problems, and we determine the worst-case approximation error.
\end{abstract}

\section{Introduction}

In modern society, a significant amount of energy is consumed for heating water \cite{heating_water}. Almost every building is connected to a district heating system or equipped with appliances for heating water locally. Typical appliances for heating water are electrical and gas heating systems, heat pumps and Combined Heat and Power units (microCHP). The heated water is stored in buffers to be prepared for the demands of the inhabitants of the building.

In this paper we consider a local heating system which consist of
\begin{itemize} \itemsep=0em
	\item a supply which represents some source of energy (electricity, gas),
	\item a converter which converts the energy into heat (hot water),
	\item a buffer which stores the heat for later usage and
	\item a demand which represents the (predicted) consumption profile of heat.
\end{itemize}
A more formal definition of the considered setting for local heating and the used parameters and variables is given in Section \ref{sec:overview}. The presented model can consider arbitrary types of energy but in this paper we use \textit{electricity} and \textit{heat} to distinguish consumed and produced energy. However, this simple model of a local heating system can not only be applied for heating water but has many other applications, e.g. heating demand of houses, fridges and freezers and inventory management.

The combination of a heating device and a buffer gives some freedom in deciding when the heat has to be produced. To use this freedom in a proper way, different objectives may be considered in practice. The energy used for heating is transported from a supply to the heating systems by electrical networks or gas pipes. These transport media have to be able to transport all the used energy and therefore have to be dimensioned for the maximal consumption peak of all houses connected to the transport network. Thus, minimizing the maximal consumption over all these houses may decrease investments in the distribution networks. This leads to planning problems for a group of heating systems which minimize peak where peak is the maximal consumption of electricity over the planning period.

In order to show that our algorithm can be adopted for various scenarios, we consider four variants of peak shaving problems. In the basic case, we assume that only heating systems are connected to electricity grid, so the goal is to find a scheduling of these heating systems which minimize the maximal consumption of electricity of all heating systems during a planing horizon. In the second case, we consider more realistic scenario where every house has some other devices consuming or producing electricity. Since we may not be allowed to control these devices, we assume that the total electrical consumption (called base load) is given. This base load is added to the consumption of all heating systems, so the objective in this case is minimizing the peak of the sum. As the production of electricity in family houses is increasing during last decade due to government subsidies on photovoltaic panels (PV) and combine heat and power units (microCHP), it is necessary to take into account not only the overconsumption of electricity but also the overproduction in local districts. This leads us to the third case which minimizes the maximal absolute value of the total electricity consumption. However, if the consumption of electricity is significantly higher than the production, minimizing the maximal consumption may give the same result as minimizing the maximal absolute value. In this case, it may be more practical to minimize the fluctuation or bandwidth. In other words, the goal is minimizing the difference between the maximal and the minimal consumption during the planning horizon. Formal definitions of all these problems are given in Section \ref{sec:overview}.

In our previous study \cite{cost_peak} we proved that the basic case is \NP-hard problem and therefore all variants studied in this paper are also \NP-hard. The computational hardness of these problems strongly relates to the 3-partition problem (for definition, see e.g. \cite{Garey}). One possibility to avoid the hardness of 3-partition problem is considering a special case of our heating problems. For example, paper \cite{cost_peak} also presents a polynomial-time algorithm for minimizing the maximal peak in the case where all converters consume the same amount of electricity when running. This algorithm reduces every instance of the special case of heating problem into a job scheduling problem $P_m|r_i,p_i=1,\text{chains}|L_{\max}$ which was proved by Dror et al.\cite{Dror98} to be polynomial. This reduction shows strong relation between our heating problems and job scheduling. Other possibility to avoid the computational hardness is developing approximation algorithms which is the task of this paper.

Although for small scale case studies, mix integer linear programming (MILP) solvers are able to find optimal solution in reasonable time \cite{milp_comfort,simple_control}, using MILP solvers become impractical in larger case studies due to high time and memory demands. Therefore, we develop polynomial-time approximation algorithms for four variants of peak shaving problems in this paper, and we determine the worst-case approximation error. Although, the classical measure of quality of approximation algorithms is the relative approximation error (see e.g. \cite{Cormen}), we use the absolute approximation error. This is due the fact that the optimal value may be zero, so the relative error may be undefined; see Section \ref{sec:error}.
\begin{figure*}[t]
\begin{center}

\resizebox{0.5\textwidth}{!}{
\begin{tikzpicture}

\node[device,minimum height=11em,minimum width=4.8em] at (-5em,-9.3em) {} node[desc] at (-9.5em,-14em) {heating system};
\node[device,minimum height=11em,minimum width=4.8em] at (0.2em,-9.3em) {} node[desc] at (-4.3em,-14em) {heating system};

\node[device] (market) at (0em,0em) {supply};
\node[device] (converter) at (0em,-5em) {converter} edge[<-] (market);
\node[desc] at (-1em,-2.5em) {Other heating systems};
\node[desc] at (1em,-9em) {\ldots};
\node[device] (buffer) at (0em,-8.5em) {buffer};
\node[device] (demand) at (0em,-12em) {demand};
\path[->] (converter) edge (buffer);
\path[->] (buffer) edge (demand);

\node[device] (c2) at (-5em,-5em) {converter} edge[<-] (market);
\node[device] (b2) at (-5em,-8.5em) {buffer} edge[<-] (c2);
\node[device] at (-5em,-12em) {demand} edge[<-] (b2);

\external{market};

\end{tikzpicture}
}

\caption{Schematic picture of heating systems split into converters, buffers and demands. A group of those heating systems is connected to a common supply of energy.}
\label{model}
\end{center}
\end{figure*}
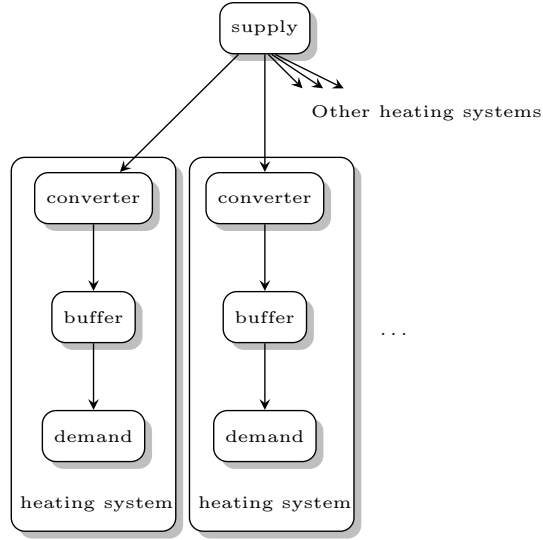

\subsection{Problem statement and results} \label{sec:overview}

This section presents a mathematical description of the studied model and a summary of the results of this paper. The used symbols, parameters and variables are summarized in Appendix.

First, we consider a discrete time model for the considered problem, meaning that we split the planning period into $T$ time intervals of the same length. We consider sets ${\cal C} = \set{1,\dots,C}$ of $C$ heating systems and ${\cal T} = \set{1,\dots,T}$ of $T$ time intervals. For mathematical purposes, we separate a heating system into a converter, a buffer and a demand; see Figure \ref{model}.

We consider a simple converter which has only two states: In every time interval the converter is either turned on or turned off. The amount of consumed electricity is $E_c$ and the amount of produced heat (or any other form of energy) is $H_c$ during one time interval in which the converter $c \in \cal{C}$ is turned on. We assume that $E_c$ is positive when a converter $c$ is consuming electricity (the converter is e.g. electrical boiler, heat pump), and negative when the converter is producing electricity (e.g. microCHP). As we discuss later, we assume that $E_c$ is non-zero for all converters $c \in \cal C$.  Let $x_{c,t} \in \set{0,1}$ be the variable indicating whether the converter $c \in \cal{C}$ is running in time interval $t \in \cal{T}$.

The state of charge of a buffer $c \in \cal{C}$ at the beginning of time interval $t \in \cal{T}$ is denoted by $s_{c,t}$ which represents the amount of heat in the buffer. Note that $s_{c,T+1}$ is the state of charge at the end of planning period. The state of charge $s_{c,t}$ is limited by a lower bound $L_{c,t}$ and an upper bound $U_{c,t}$. Those two bounds are usually constant over time: the upper bound $U_{c,t}$ is the capacity of buffer and the lower bound $L_{c,t}$ is mostly zero. However, it may be useful to allow different values, e.g. in this paper, we assume that $L_{c,1} = U_{c,1}$, so the initial state of charge $s_{c,1}$ is fixed.

The amount of consumed heat by the inhabitants of the house from heating system $c \in \cal{C}$ during time interval $t \in \cal{T}$ is denoted by $D_{c,t}$. This amount is assumed to be given and is called the demand of heating system $c$. The total amount of electricity consumed by other devices in all houses $\cal C$ during time interval $t$ is called base load and it is denoted by $F_t$. In this paper, we study off-line problems, so we assume that demands $D_{c,t}$ and the base load $F_t$ are given for the whole planning period.

The operational variables of the converters $x_{c,t}$ and the states of charge of buffers $s_{c,t}$ are restricted by the following invariants.
\begin{eqnarray}
	s_{c,t+1} = s_{c,t} + H_c x_{c,t} - D_{c,t} & \text{ for } & c \in {\cal C},\, t \in {\cal T} \label{eq:charging} \\
	L_{c,t} \le s_{c,t} \le U_{c,t} & \text{ for } & c \in {\cal C},\, t \in {\cal T} \cup \set{T+1} \label{eq:buffer} \\
	x_{c,t} \in \set{0,1} & \text{ for } & c \in {\cal C},\, t \in {\cal T} \label{eq:converter}
\end{eqnarray}

The basic objective function of this paper is minimizing the peak electricity consumption.
\begin{eqnarray}
\text{Basic peak shaving:} &&  \text{minimize } m \label{eq:peak} \\
								&& \text{where }  m \ge \sum_{c \in \cal C} E_c x_{c,t} \text{ for } t \in \cal T \label{eq:m}
\end{eqnarray}

In more general case, we also add the base load $F_t$ to the electricity consumed by all converters.
\begin{eqnarray}
\text{Maximal peak shaving:} &&  \text{minimize } m \nonumber\\
&& \text{where }  m \ge F_t + \sum_{c \in \cal C} E_c x_{c,t} \text{ for } t \in \cal T \label{eq:maxpeak}
\end{eqnarray}
Since the basic peak shaving problem is a special case of the maximal peak shaving problem with the base load where $F_t = 0$, it suffices to develop approximation algorithm for the maximal peak shaving problem.

In districts with large PV installations, it is common that more electricity is produced than consumed during summer. In this case, the distribution grid needs to be dimensioned not only for the maximal consumption but also for the maximal production. One possible approach to capture this issue is introduction an objective function which minimize the maximal absolute value of the total electricity consumption.
\begin{eqnarray}
\text{Absolute peak shaving:} &&  \text{minimize } m \nonumber\\
&& \text{where }  m \ge \left| F_t + \sum_{c \in \cal C} E_c x_{c,t} \right| \text{ for } t \in \cal T \label{eq:maxabs}
\end{eqnarray}
Note that the absolute and the maximal peak shaving problems give the same result when $E_c$ and $F_t$ are positive. Similarly, when the average of the total electrical consumption over planning horizon is sufficiently far from zero, only one bound of \eqref{eq:maxabs} dominates the solution (the upper one for overconsumption and the lower one for overproduction) and the other bound may have negligible influence. In such cases, an external source of energy is needed to balance the difference between the production and the consumption. The external source of energy may be a conventional generator of constant production which may be unable to balance large fluctuations. One possible approach to reduce such fluctuation is maximizing the minimal total electricity consumption together with minimizing the maximal total consumption. In order to incorporate these two goals into one objective function, we introduce the fluctuation peak shaving problem which minimize the difference between the maximal and the minimal consumption during whole planning horizon.
\begin{eqnarray}
\text{Fluctuation peak shaving:} &&  \text{minimize } m_u - m_l \nonumber\\
&& \text{where }  m_l \le F_t + \sum_{c \in \cal C} E_c x_{c,t} \le m_u \text{ for } t \in \cal T \label{eq:maxfluctuation}
\end{eqnarray}
In the last objective function, $m_l$ and $m_u$ are the lower and upper bounds on the electricity consumption, respectively.

%Since the constraint \eqref{eq:maxpeak} can be equivalently stated as $-m \le F_t + \sum_{c \in \cal C} E_c x_{c,t} \le m$, absolute and fluctuation peak shaving problems may look to be quite similar. However, absolute peak shaving cannot be solved using fluctuation peak shaving and also fluctuation peak shaving cannot be solved using absolute peak shaving. Indeed, let $\langle-m, m \rangle$ and $\langle m_l, m_u \rangle$ be the shortest intervals of total electricity consumptions of absolute and fluctuation pea shaving problems, respectively. On one hand, the interval $\langle -m, m \rangle$ cannot be shorter than $\langle m_l, m_u \rangle$. On the other hand, the upper bound $m_u$ can be greater than $m$ or the lower bound $m_l$ can be smaller than $-m$. Hence, these objective function gives different results in general and both of them have practical applications.

As we discuss above, we assume that $E_c$ is non-zero for all converters $c \in \cal C$ since converters $c$ with electricity consumption $E_c = 0$ have no influence on any objective function studied in this paper, and therefore they can be scheduled independently on other converters.

\subsection{Approximation error} \label{sec:error}

In this paper, we develop polynomial-time approximation algorithms for four variants of peak shaving problems introduced above. In order to determine the approximation error, we mainly consider the absolute error instead of the relative error usually used in literature (see e.g. \cite{Cormen}). Formally, let $m^O$ be the optimal value of objective function of the maximal peak shaving problem and $m^A$ be the value of objective function of a solution found by our algorithm. The relative error is the maximal ratio $\frac{m^A}{m^O}$ over all instances of the problem, and the absolute error is the  maximal difference $m^A - m^O$. We prove that our algorithm always finds a solution satisfying $m^A - m^O \le E$ where $E = \max_{c \in \cal C} |E_c|$. We obtain the same approximation error for the absolute peak shaving problem meaning that the interval $\langle -m^O,m^O \rangle$ contains total electrical consumption $F_t + \sum_{c \in C}E_c x_{c,t}$ for all $t \in \cal T$ of the optimal solution needs to be extended into $\langle -m^A,m^A \rangle \subseteq \langle -m^O-E,m^O+E \rangle$ for the approximated solution found by our algorithm. Similarly for the fluctuation peak shaving problem, the interval of total electricity consumptions $\langle m^O_l,m^O_u \rangle$ for the optimal solution needs to be extended into $\langle m^A_l,m^A_u \rangle \subseteq \langle m^O_l-E, m^O_u+E \rangle$ for the approximation solution. This implies that the absolute error for the fluctuation peak shaving problem is at most $2E$.

Reader may ask why we consider the absolute error instead of the relative error. Observe that the optimal value $m^O$ may be zero in all four objective function since $E_c$ may be positive for some $c$ and negative for other $c$ and the production and consumption of electricity may be perfectly balanced. This implies that the relative error may be undefined. Even if we restrict to positive values of $E_c$, the base load $F_t$ may cause the optimal value to be zero. Even if we further set $F_t$ to be always zero, then the fluctuation peak shaving problem may have zero optimal value. Furthermore, the decision problem whether the fluctuation peak shaving problem have zero optimal value is strongly \NP-complete by a simple reduction from the 3-partition problem (see e.g. \cite{cost_peak}). Therefore, the classical relative approximation error have sense only for the basic, maximal and absolute peak shaving problems when $E_c > 0$ and $F_t = 0$ and in this case the objectives gives the same value. In the case, without loss of generality we can assume that every converter has to be run at least once (i.e. $L_{c,T+1} > U_{c,1}$). Since $m^O \ge E$ and $m^A \le m^O + E$, the relative approximation error is at most 2. However, since our problems are strongly \NP-complete, there is no fully polynomial-time approximation scheme unless $\P = \NP$. It is question whether there exists a polynomial-time approximation scheme for this case.

\subsection{Motivation}

The considered problems originate from a project called \textit{MeppelEnergie} which plans to build a group of houses and a biogas station in Meppel, a small city in the Netherlands\footnote{For more details, see websites \url{http://www.utwente.nl/ctit/energy/projects/meppel.html} and \url{http://www.meppelwoont.nl/nieuwveense-landen/}}. In this project, the houses will have a heat pump for space heating and tap demands. Due to Dutch legislation, the biogas station will provide electricity only to those heat pumps. Therefore, the heat pumps should be scheduled in such a way that they only consume, if possible, the electricity produced by the biogas station. If this is not possible, the remaining energy has to be bought on the electricity market at minimal cost.

The study \cite{simple_control} shows that some central control of all heat pumps is necessary to avoid large peak loads. Therefore, our task is to design one or more algorithms to control all heat pumps. The first of our proposed algorithms is called \textit{global MILP control} which uses a Mix Integers Linear Programming solver to find an optimal (or near to optimal) solution of the minimizing peak problem. The paper \cite{simple_control} shows that this approach can be used only for small number of houses. For larger number of houses, a faster algorithm for the minimizing peak problem is necessary but the problem is \NP-hard. Therefore, we try to find an easier problem which can be solved faster (polynomial algorithm for the case where all heat pump consume the same amount of energy; FPT algorithm for the general case \cite{simple_control}). In practice, it may be sufficient to find a solution which is close to the optimum. One such approximation algorithm is presented in \cite{simple_control} but no worst-case analysis is given. Developing an approximation algorithm with proven worst-case analysis is a task of this paper.

%Two independent distribution networks for electricity (first for heat pumps and second for other usage) is common for district with combine heat and power units in the Netherlands. However, this technical solution is rare in other countries. Therefore, electrical consumption of heat pumps is increased by load of other devices.

\subsection{Overview}

Now, we present basic ideas of our approximation algorithms and organization of the paper. Section \ref{sec:application} presents related works and more details on applications of the results of the paper. Section \ref{sec:preprocessing} reformulates the studied problem into a simpler form. The basic idea of our approximation algorithms is finding an optimal solution $y$ of the relaxed problem and rounding this relaxed solution into a binary solution $x$. Section \ref{sec:idea} gives more details about this idea. Section \ref{sec:forest} studies the structure of non-integer values in the relaxed solution $y$. Section \ref{sec:order} constructs an order of these non-integer values. The relaxed solution $y$ is rounded in this order into an integer solution $x$ in Section \ref{sec:rounding}. Section \ref{sec:conclusion} concludes this paper with remarks and open problem. Parameters, variables and symbols are listed in Appendix.
%Our approximation algorithm first replaces the integer conditions \eqref{eq:converter} by continuous constraints $0 \le x_{c,t} \le 1$ and uses algorithms for linear problem to find an optimal solution $y$ of this relaxed problem in polynomial time. Section \ref{sec:forest} studies the structure of non-integer values of solution $y$. The knowledge of the structure is used in Section \ref{sec:rounding} to develop an algorithm rounding the solution $y$ into integer approximated solution $x$.

\section{Related works and applications} \label{sec:application}

In the following we present related literature and give some possible applications of this model.

Some related works can be found in the inventory management and lot-sizing literature (see e.g. \cite{drexl1997lot,karimi2003capacitated} for reviews). In inventory control problems (see e.g. \cite{axsater2006inventory}) a buffer may represent an inventory of items, whereby a converter represents the production of items and demand represents the order quantities. As our problem consists of only one commodity, the single item lot sizing problem is related (see \cite{brahimi2006single} for a review). Wagner and Whitin \cite{wagner1958dynamic} presented an $\mathcal{O}(T^2)$ algorithm for the uncapacitated lot-sizing problem which was improved by Federgruen and Tzur \cite{federgruen1991simple} to $\mathcal{O}(T \log T)$. On the other hand, Florian, Lenstra and Rinnooy \cite{florian1980deterministic} proved that the lot-sizing problem with upper bounds on production and order quantities is NP-complete. Computational complexity of the capacitated lot sizing problems is studied in \cite{bitran1982computational}. Pessoa at.al. \cite{de2019automatic} studied multiple variants of Multi-level capacitated lot-sizing problem which is an NP-hard problem, so they presented an automatic algorithm-generation approach based on heuristics and a multi-population genetic algorithm. Quezada et.al. \cite{quezada2019stochastic} proposed a stochastic dual dynamic integer programming algorithm for the multi-echelon multi-item lot-sizing problem. There are also papers about approximation schemes for single-item lot-sizing problems with economic objectives (see e.g. \cite{van2001fully,chubanov2006fptas,chubanov2012fptas}) and multi-item variants (\cite{levi2008approximation}).
Absi and Kedad-Sidhoum \cite{absi2008multi} presented a branch-and-cut algorithm for the multi-item capacitated lot-sizing problem. The main difference in this paper is the objective function which consider minimizing of fluctuations of electricity consumption instead of overall production and holding cost.

One other related area is vehicle routing and scheduling (see e.g. \cite{laporte1992vehicle} for an overview of this area). For example, Lin, Gertsch and Russell \cite{hong2007linear} studied optimal vehicle refuelling policies. In their model, a refuelling station can provide an arbitrary amount of gas while our converter is restricted to two possible states of heat generation. Other papers on vehicle refuelling policies are more distant from our research since they consider that a car is routed on a graph (see e.g. \cite{sweda2012finding,lin2008finding}).

Balancing electricity using a group of microCHPs installed in individual households was studied by Bosman et al.\cite{bosman2012planning}. On one hand, they considered more complex model of a converter (e.g. minimal running time and starting profile) which makes the model more accurate in practice. On the other hand, they presented only experimental results without any study of worst-case behaviour of their algorithms.

Peak load shaving is a classical problem in smart grids and demand side management studied in many papers (see e.g. \cite{rahimi2013simple,wang2013grid,alam2014controllable,zhao2013strategies}). This paper considers peak shaving as an objective, while some other studies set the minimal or the maximal electrical consumption as a hard constraint and optimize different objectives, for example \cite{bosman2012planning} maximizes the profit on the electricity market and \cite{Klauw2014Scheduling} minimizes the number of charging cycles of batteries.

\section{Reformulation of the problem} \label{sec:preprocessing}

In this section, we simplify the problem presented in Section \ref{sec:overview}. Since we used the same reformulation in our previous papers \cite{cost_peak,greedy}, we present only the basic idea of the reformulation here for sake of completeness. The goal of this reformulation is to replace conditions \eqref{eq:charging} and \eqref{eq:buffer} by one condition \eqref{eq:sum}.

First, we expand the recurrence formula \eqref{eq:charging} into an explicit equation.
$$s_{c,t+1} = s_{c,1} + \sum_{i=1}^t H_c x_{c,i} - \sum_{i=1}^t D_{c,i}$$
Since we assume that the initial state of charge satisfies $s_{c,1} = L_{c,1} = U_{c,1}$, we can replace $s_{c,1}$ by $L_{c,1}$ and substitute into inequalities \eqref{eq:buffer} and after elementary operations we obtain
$$\frac{L_{c,t+1} - L_{c,1} + \sum_{i=1}^t D_{c,i}}{H_c} \le \sum_{i=1}^t x_{c,i} \le \frac{U_{c,t+1} - L_{c,1} + \sum_{i=1}^t D_{c,i}}{H_c}.$$
Here, the sum $\sum_{i=1}^t x_{c,i}$ is restricted by a lower and an upper bounds which depend only on input parameters, so these bounds can be easily precomputed to obtain
\begin{eqnarray}
A_{c,t} \le \sum_{i=1}^t x_{c,i} \le B_{c,t} \text{ for } c \in {\cal C}, t \in {\cal T}. \label{eq:sum}
\end{eqnarray}
Further analysis presented in \cite{cost_peak,greedy} proves that bounds $A_{c,t}$ and $B_{c,t}$ are integral. Furthermore, we can observe that a binary solution $x$ satisfies \eqref{eq:charging} and \eqref{eq:buffer} if and only if $x$ satisfies \eqref{eq:sum} where the state of charge $s$ is directly computed from $x$ by \eqref{eq:charging}.

\section{Ideas of our algorithms}\label{sec:idea}

In this section, we present ideas of our algorithms. The original problems minimize objectives \eqref{eq:maxpeak}, \eqref{eq:maxabs} and \eqref{eq:maxfluctuation} subject to \eqref{eq:charging}, \eqref{eq:buffer} and \eqref{eq:converter}. The previous section shows how conditions \eqref{eq:charging}, \eqref{eq:buffer} can be simplified to the condition \eqref{eq:sum} in linear time. The main contribution of the paper is presenting an approximation polynomial time algorithm minimizing objectives \eqref{eq:maxpeak}, \eqref{eq:maxabs} and \eqref{eq:maxfluctuation} subject to \eqref{eq:converter} and \eqref{eq:sum}.

In the next step, we consider the relaxed problem obtained by replacing integer constrains \eqref{eq:converter} by inequalities
\begin{equation}\label{eq:relaxed}
	0 \le x_{c,t} \le 1  \text{ for every } c \in {\cal C} \text{ and } t \in {\cal T}.
\end{equation}
Problems minimizing objectives \eqref{eq:maxpeak}, \eqref{eq:maxabs} and \eqref{eq:maxfluctuation} subject to \eqref{eq:sum} and \eqref{eq:relaxed} are (non-integral) Linear Programming problems, so an optimal relaxed solution $y$ can be found in polynomial time. If this relaxed problem has no feasible solution, then there is no integral solution. Hence, in the rest of this paper we assume that the relaxed problem is feasible and $y$ denotes an optimal relaxed solution.  Sections \ref{sec:forest}, \ref{sec:order} and \ref{sec:rounding} present how the optimal relaxed solution $y$ can be rounded to approximated integral solution $x$ satisfying conditions in the following theorem.

\begin{theorem}\label{thm:round}
For every $y \in \Re^{ {\cal C} \times {\cal T}}$ with $0 \le y_{c,t} \le 1$ for every $t \in \cal{T}$ and $c \in \cal C$ there exists $x \in \Re^{ {\cal C} \times {\cal T}}$ with $x_{c,t} \in \set{0,1}$ for every $t \in \cal{T}$ and $c \in \cal C$ such that
\begin{eqnarray}
\floor{\sum_{i=1}^t y_{c,i}} \le \sum_{i=1}^t x_{c,i} \le \ceil{\sum_{i=1}^t y_{c,i}} && \text{ for every } c \in {\cal C} \text{ and } t \in \cal T \label{eq:bounds} \\
\left| \sum_{c \in \cal C} E_c x_{c,t} - \sum_{c \in \cal C} E_c y_{c,t} \right| \le \max_{c \in \cal C} |E_c| && \text{ for every } t \in \cal T \label{eq:approx}
\end{eqnarray}
and $x$ can be found in time $\mathcal{O}(T^2C^2)$.
\end{theorem}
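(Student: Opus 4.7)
The plan is to prove the theorem constructively, following the three-stage outline previewed by the author: first establish a sparse ``forest-like'' structure on the set of fractional entries, then impose a processing order on them, and finally round them one at a time, maintaining two invariants throughout. The first invariant is the prefix-sum inclusion $\sum_{i=1}^t x_{c,i} \in [\lfloor \sum_{i=1}^t y_{c,i} \rfloor, \lceil \sum_{i=1}^t y_{c,i}\rceil]$; the second is the column-discrepancy bound $|\sum_c E_c x_{c,t} - \sum_c E_c y_{c,t}| \le \max_c |E_c|$.

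The natural first step is to reparametrize: let $Y_{c,t}=\sum_{i=1}^t y_{c,i}$ and $X_{c,t}=\sum_{i=1}^t x_{c,i}$, so that $y_{c,t}=Y_{c,t}-Y_{c,t-1}$ and similarly for $x$. Condition \eqref{eq:bounds} then says exactly that each integer $X_{c,t}$ is one of the two integers surrounding $Y_{c,t}$, i.e.\ the rounding choice for every $(c,t)$ with $Y_{c,t}\notin\mathbb{Z}$ is binary. Next, I build the bipartite ``fractional graph'' whose vertices on one side are converters $c$ and on the other side are times $t$, with an edge for each $(c,t)$ having $Y_{c,t}\notin\mathbb{Z}$. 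Given any cycle in this graph, one can perturb the values along the cycle by alternating $\pm\delta$ of the right magnitudes so that neither the row nor the weighted column sums change, while some edge becomes integral. Repeatedly breaking cycles this way reduces the fractional structure to a forest \emph{without affecting any invariant}; this is what I expect Section \ref{sec:forest} to deliver.

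Once the fractional support is a forest, I root each tree and process its edges from the leaves inward, in an order to be supplied by Section \ref{sec:order}. For each fractional $(c,t)$, I decide whether to round $X_{c,t}$ up or down; the rounding error induced on the column sum at time $t$ is at most $|E_c|\le\max_c|E_c|$. Rounding $(c,t)$ may violate the prefix-sum envelope for converter $c$ at later times, but the forest structure guarantees exactly one unprocessed neighbour of $(c,t)$ in the converter's row, which I then pre-commit to round in the opposite direction to cancel the violation. Because the graph is acyclic, this ``charge propagation'' never revisits a variable; the algorithm terminates after at most $TC$ rounding steps, and since each step costs $\mathcal{O}(TC)$ to locate the neighbour and update partial sums, the total time is $\mathcal{O}(T^2C^2)$.

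The main technical obstacle I foresee is bookkeeping for the column-sum invariant during the final rounding phase: when an entry $(c,t)$ is rounded and its tree neighbour $(c,t')$ is rounded in compensation, both columns $t$ and $t'$ acquire a column-sum perturbation of order $|E_c|$, and I must show that each column is perturbed by at most one such ``uncanceled'' rounding over the whole execution. This is precisely where the ordering from Section \ref{sec:order} must be chosen so that a column, once its discrepancy has been set, is never revisited by a future rounding. Verifying that property, together with the accounting that it delivers \eqref{eq:approx} with the exact constant $\max_c|E_c|$, is the crux of the proof.
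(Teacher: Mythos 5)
Your three-stage outline (cancel cycles in the fractional support to get a forest, fix a processing order, round sequentially while maintaining the prefix-sum envelope and the column bound) is the same skeleton as the paper's Sections \ref{sec:forest}--\ref{sec:rounding}, but both places where the real work happens have genuine gaps. First, your graph is built on the wrong objects: you put converters on one side, times on the other, and an edge whenever the prefix sum $Y_{c,t}$ is fractional, and you claim a cycle can be cancelled by alternating perturbations ``without affecting any invariant.'' Perturbing a prefix sum $Y_{c,t}$ while keeping the neighbouring prefix sums fixed changes the two entries $y_{c,t}$ and $y_{c,t+1}$, and nothing in your graph prevents those entries from already sitting at $0$ or $1$: for instance $y_{c,t-1}=0.5$, $y_{c,t}=y_{c,t+1}=0$ gives an edge at $(c,t)$, yet neither sign of the perturbation keeps the entries in $[0,1]$ (and you do need $0\le z_{c,t}\le 1$ throughout, since otherwise rounding the prefix sums to $\floor{\cdot}$ or $\ceil{\cdot}$ need not produce a binary $x$). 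Even when a feasible direction exists, the step may terminate because an entry hits $0$ or $1$, which deletes no edge of \emph{your} graph, so you have no progress measure and no guarantee of reaching a forest. The paper's Lemmas \ref{lem:vertex}--\ref{lem:z} avoid exactly this: the converter side consists of pairs $(c,W)$, where $W$ is a maximal run of fractional entries linked by fractional prefix sums, edges correspond to fractional \emph{entries}, the cycle perturbation shifts pairs of fractional entries $y_{c_i,t_i},y_{c_i,t_{i+1}}$ by $\pm\alpha/E_{c_i}$, and fractionality of the entries and of the intermediate prefix sums is precisely what guarantees room to move and a strictly decreasing potential $\varphi$.

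Second, the column bound \eqref{eq:approx} --- which you yourself identify as the crux --- is not established. Your mechanism (round $(c,t)$, pre-commit its row neighbour to the opposite rounding) serves the prefix-sum envelope \eqref{eq:bounds}, but it does not explain why any fixed column accumulates at most one uncancelled error of magnitude $\max_c|E_c|$, and ``exactly one unprocessed neighbour in the converter's row'' does not follow from acyclicity alone. In the paper this is carried by the order of Lemma \ref{lem:order} together with the sign test in Algorithm \ref{alg:rounding}: the class $(c_i,W_i)$ being rounded has at most one non-leaf time neighbour $t_i$; every other time of $W_i$ is a leaf of $G_i$, hence meets no later class, so the error it receives (at most $E$) is final; and $t_i$, the only column that may already carry an error, gets $x_{c_i,t_i}$ set to $1$ or $0$ according to the sign of $E_{c_i}(\breve{z}_{t_i}-\breve{x}_{t_i})$, which is what keeps its accumulated discrepancy within $E$ (the induction of Lemma \ref{lem:approx}). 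A further detail you gloss over is binarity itself: the floors and ceilings used within a run must be coordinated around $t_i$ (as in Algorithm \ref{alg:rounding} and Lemma \ref{lem:feasible}), otherwise $x_{c,t}=X_{c,t}-X_{c,t-1}$ can fall outside $\{0,1\}$. Until these points are supplied, the sketch is a plausible plan but not a proof.
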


Since we need to repeat sums used in constrains \eqref{eq:m} and \eqref{eq:sum} quite often, we define the following two symbols to simplify the notation.
$$\hat{x}_{c,t} = \sum_{i=1}^t x_{c,i} \text{ \qquad and \qquad } \breve{x}_t = F_t + \sum_{c \in \cal C} E_c x_{c,t}$$
Symbols $\hat{y}_{c,t}$, $\breve{y}_t$, $\hat{z}_{c,t}$ and $\breve{z}_t$ are defined analogously for solutions $y$ and $z$.

In summary, the approximated integral solution $x$ can be found in polynomial time and the absolute approximation error is determined in the following theorem.
\begin{theorem}\label{thm:error}
There exists an appropriate polynomial time algorithms minimizing objectives \eqref{eq:maxpeak}, \eqref{eq:maxabs} and \eqref{eq:maxfluctuation} subject to \eqref{eq:charging}, \eqref{eq:buffer} and \eqref{eq:converter} with an absolute error $E$ for Maximal peak shaving \eqref{eq:maxpeak} and Absolute peak shaving \eqref{eq:maxabs} and $2E$ for Fluctuation peak shaving problem \eqref{eq:maxfluctuation} where $E = \max_{c \in \cal C} |E_c|$.
\end{theorem}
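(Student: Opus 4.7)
The proof will combine the preprocessing of Section \ref{sec:preprocessing}, the LP relaxation from Section \ref{sec:idea}, and the rounding guarantee of Theorem \ref{thm:round}. The plan is as follows. First, I would apply the reformulation from Section \ref{sec:preprocessing} to replace constraints \eqref{eq:charging}--\eqref{eq:buffer} by the integer sum bounds \eqref{eq:sum}; this is done in linear time, and recall that $A_{c,t},B_{c,t}$ are integral. Next, I would solve the LP relaxation obtained by keeping the objective \eqref{eq:maxpeak} (resp.\ \eqref{eq:maxabs} or \eqref{eq:maxfluctuation}) together with \eqref{eq:sum} and \eqref{eq:relaxed}; this is a linear program of polynomial size, hence solvable in polynomial time, yielding an optimal relaxed solution $y$ with relaxed objective value $m^L$. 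Finally I would apply Theorem \ref{thm:round} to $y$ to obtain an integral $x \in \{0,1\}^{\mathcal{C}\times\mathcal{T}}$ in time $\mathcal{O}(T^2C^2)$.

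The first thing to verify is feasibility of $x$. By \eqref{eq:sum} applied to $y$, we have $A_{c,t}\le \hat{y}_{c,t}\le B_{c,t}$, and since $A_{c,t},B_{c,t}\in\mathbb{Z}$, integrality gives
\[
A_{c,t}\le \floor{\hat{y}_{c,t}} \le \hat{x}_{c,t} \le \ceil{\hat{y}_{c,t}} \le B_{c,t},
\]
where the middle inequalities are \eqref{eq:bounds}. Hence $x$ satisfies \eqref{eq:sum}, and therefore, by the equivalence noted at the end of Section \ref{sec:preprocessing}, also the original invariants \eqref{eq:charging}--\eqref{eq:converter}.

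The error bound in each case is obtained directly from \eqref{eq:approx}, which reads $|\breve{x}_t - \breve{y}_t|\le E$ for every $t\in\mathcal{T}$ (note that adding $F_t$ on both sides of \eqref{eq:approx} does not change the difference). For Maximal peak shaving, $\max_t \breve{x}_t \le \max_t \breve{y}_t + E = m^L + E \le m^O + E$, where the last step uses that the LP relaxation is a lower bound on the integer optimum. The Absolute peak shaving case is analogous: from $|\breve{x}_t|\le |\breve{y}_t|+E$ we get $\max_t|\breve{x}_t|\le m^L+E\le m^O+E$. For Fluctuation peak shaving, the bound $|\breve{x}_t-\breve{y}_t|\le E$ gives $\max_t \breve{x}_t \le \max_t \breve{y}_t + E$ and $\min_t \breve{x}_t \ge \min_t \breve{y}_t - E$, so the fluctuation of $x$ is at most the fluctuation of $y$ plus $2E$, which is at most $m^O+2E$.

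The only place where care is needed is the first step: one must check that the LP relaxation is actually a valid lower bound on the integer optimum of the original problem (not merely on the reformulated one). This is immediate because the reformulation preserves the set of feasible integer solutions exactly, so the integer optimum of the original problem equals that of the reformulated one, and the LP is a relaxation of the latter. Putting the three steps together yields the claimed polynomial-time algorithms with absolute error $E$ for the Maximal and Absolute peak shaving problems and $2E$ for the Fluctuation peak shaving problem.
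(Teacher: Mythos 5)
Your proposal is correct and follows essentially the same route as the paper: solve the LP relaxation of the reformulated problem, round via Theorem \ref{thm:round}, use integrality of $A_{c,t},B_{c,t}$ with \eqref{eq:bounds} for feasibility, and use \eqref{eq:approx} together with the fact that the relaxed optimum lower-bounds the integer optimum to get the errors $E$, $E$ and $2E$. The extra remarks you add (the $F_t$ cancellation in \eqref{eq:approx} and the equivalence of the reformulated feasible set) are points the paper treats implicitly, so there is nothing to correct.
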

\begin{proof}
Bounds $A$ and $B$ and a relaxed optimal solution $y$ can be found in polynomial time. By Theorem \ref{thm:round}, we can find an approximated integral solution $x$ in polynomial time. The solution $x$ is feasible since assumptions \eqref{eq:bounds} implies $A_{c,t} \le \floor{\hat{y}_{c,t}} \le \hat{x}_{c,t} \le \ceil{\hat{y}_{c,t}} \le B_{c,t}$ for every $c \in {\cal C}$ and $t \in \cal T$ as required by \eqref{eq:sum}.

Let $m^O$ be the optimal value of an objective function and $m^A$ be the value of an objective function of the approximated solution $x$. For the Maximal peak shaving problem, it holds that $m^O \ge \max_t \breve{y}_t$ since $y$ is an optimal relaxed solution. From \eqref{eq:approx} it follows that $m^A = \max_t \breve{x}_t \le \max_t \breve{y}_t + E$ and
$$m^A - m^O \le \max_t \breve{y}_t + E - \max_t \breve{y}_t = E$$
and so the absolute approximation error is at most $E$. Similarly for the Absolute peak shaving problem, it holds that $m^O \ge \max_t |\breve{y}_t|$. Hence,
$$m^A - m^O \le \max_t |\breve{x}_t| - \max_t |\breve{y}_t| \le \max_t |\breve{y}_t| + E - \max_t |\breve{y}_t| = E$$
and so the absolute approximation error is also at most $E$. Finally for the Fluctuation peak shaving problem, it holds that $m^O \ge \max_t \breve{y}_t - \min_t \breve{y}_t$. Hence,
$$m^A - m^O \le (\max_t \breve{x}_t - \max_t \breve{y}_t) + (\min_t \breve{y}_t - \min_t \breve{x}_t) \le 2E$$
and so the absolute approximation error is at most $2E$.
\end{proof}

\section{Structure of non-integer values in relaxed solutions} \label{sec:forest}

In this section, we study properties of vertices of a polytope
$$P = \set{x : \text{ \eqref{eq:relaxed} and  \eqref{eq:bounds} hold, and $\breve{x}_t = \breve{y}_t$ holds for every $t \in \cal T$}}$$
where $y$ is an optimal relaxed solution. Clearly, $P$ contains only relaxed optimal solutions including $y$.

The structure of $P$ is described in Lemmas \ref{lem:vertex} and \ref{lem:forest}. To simplify the notation, we define
$$T(t_1,t_2) =
\begin{cases}
\set{t_1, t_1+1, \ldots, t_2-1} & \text{ if } t_1 \le t_2 \\
\set{t_2, t_2+1, \ldots, t_1-1} & \text{ otherwise}
\end{cases}$$
for $t_1,t_2 \in \cal T$. Let $T_c = \set{t \in {\cal T}: y_{c,t} \notin \Z}$ and $\sim_c$ be a relation on $T_c$ such that $t_1 \sim_c t_2$ if and only if $\hat{y}_{c,t} \notin \Z$ for every $t \in T(t_1, t_2)$. Since the relation $\sim_c$ is an equivalence on $T_c$, we denote by $S_c$ the set of equivalence classes for every converter $c$. Our goal is to reduce the number of non-integral values $\varphi(y) = |\set{(c,t);\; y_{c,t} \notin \Z}| + |\set{(c,t);\; \hat{y}_{c,t} \notin \Z}|$.

\begin{lemma}\label{lem:vertex}
If there exists a sequence of converters $c_1, \ldots, c_k$ and a sequence of distinct time intervals $t_1, \ldots, t_k, t_{k+1}=t_1$ for $k \ge 2$ such that conditions
\begin{enumerate}[label=(B\arabic*),ref=(B\arabic*)]
\item $y_{c_i,t_i}, y_{c_i,t_{i+1}} \notin \Z$ \label{itm:Z}
\item $t_i \sim_{c_i} t_{i+1}$ \label{itm:Ti}
\end{enumerate}
hold for every $i \in \set{1,\ldots,k}$, then $y$ is not a vertex of $P$. Furthermore, there exists a solution $z \in P$ such that $\varphi(z) < \varphi(y)$ which can be found in time $\mathcal{O}(TC)$.
\end{lemma}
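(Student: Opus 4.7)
I would prove the lemma by exhibiting a nonzero perturbation direction $d \in \Re^{{\cal C}\times{\cal T}}$ such that $y + \epsilon d \in P$ for every $\epsilon$ in some open interval containing $0$; this immediately shows $y$ is not a vertex, and pushing $\epsilon$ to the first boundary of $P$ yields the required $z$ with $\varphi(z) < \varphi(y)$.

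The direction $d$ is built edge by edge from the cycle. For each $i \in \set{1, \ldots, k}$ I introduce a weight $\alpha_i \in \Re$ and let edge $i$ contribute $+\alpha_i$ to coordinate $(c_i, t_i)$ and $-\alpha_i$ to $(c_i, t_{i+1})$, summing contributions at any coordinate touched by several edges. The weights are forced by the peak equalities $\sum_{c \in \cal C} E_c d_{c,t} = 0$: outside $\set{t_1,\ldots,t_k}$ these hold automatically, while at $t_j$ only edge $j{-}1$ (ending there with $-E_{c_{j-1}}\alpha_{j-1}$) and edge $j$ (starting there with $+E_{c_j}\alpha_j$) contribute, reducing the constraint to $E_{c_j}\alpha_j = E_{c_{j-1}}\alpha_{j-1}$ with indices read modulo $k$. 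Since every $E_c$ is nonzero, this recurrence admits the explicit nonzero solution $\alpha_j = (E_{c_1}/E_{c_j})\alpha_1$, and the product of the ratios telescopes around the cycle to one, so the system closes consistently.

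I then check that small perturbations $y \pm \epsilon d$ preserve \eqref{eq:relaxed} and \eqref{eq:bounds}. The support of $d$ sits at entries where $y_{c_i, t_i}, y_{c_i, t_{i+1}} \in (0,1)$ strictly by \ref{itm:Z}, so the box constraints remain slack. The prefix-sum $\hat{d}_{c,t}$ vanishes outside $\bigcup_{i\,:\, c_i = c} T(t_i, t_{i+1})$, and on each such range condition \ref{itm:Ti} gives $\hat{y}_{c_i, t} \notin \Z$, so $\floor{\hat{y}_{c_i,t}} < \hat{y}_{c_i,t} < \ceil{\hat{y}_{c_i,t}}$ strictly and a small perturbation stays within the integer bracket demanded by \eqref{eq:bounds}. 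The peak equality $\breve{x}_t = \breve{y}_t$ holds by construction of the $\alpha_i$.

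To obtain $\varphi(z) < \varphi(y)$, push $\epsilon$ in either sign until the first feasibility bound tightens at some $\epsilon^\star$. At that moment either some $y_{c,t} + \epsilon^\star d_{c,t}$ reaches $\set{0,1}$ or some $\hat{y}_{c,t} + \epsilon^\star \hat{d}_{c,t}$ reaches $\set{\floor{\hat{y}_{c,t}}, \ceil{\hat{y}_{c,t}}} \subset \Z$. Either event integralizes a coordinate previously counted by $\varphi$; conversely, $d$ and $\hat{d}$ vanish at every coordinate that was already integer, so no new non-integer values appear. Hence $\varphi(z) < \varphi(y)$ for $z = y + \epsilon^\star d$, and $\epsilon^\star$ is located by scanning $\mathcal{O}(TC)$ candidate breakpoints among the entries of $y$ and $\hat{y}$, matching the stated $\mathcal{O}(TC)$ runtime.

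The main subtlety I anticipate is the risk that contributions cancel and leave $d \equiv 0$, which can happen when the cycle revisits the same converter in incompatible directions. I would address this by choosing the cycle minimal with respect to \ref{itm:Z} and \ref{itm:Ti}: any cancelling pair of edges could be excised to yield a strictly shorter sequence satisfying the same hypotheses, contradicting minimality, so the direction produced by the construction is nonzero as required.
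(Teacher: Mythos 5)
Your construction coincides with the paper's: the paper perturbs along the cycle by $z_{c_i,t_i}(\alpha)=y_{c_i,t_i}+\alpha/E_{c_i}$ and $z_{c_i,t_{i+1}}(\alpha)=y_{c_i,t_{i+1}}-\alpha/E_{c_i}$, which is exactly your direction $d$ with $\alpha_i=\alpha/E_{c_i}$; it likewise uses \ref{itm:Z} and \ref{itm:Ti} to argue that the box constraints \eqref{eq:relaxed} and the bracket constraints \eqref{eq:bounds} are strictly slack on the support of the perturbation, and then pushes $\alpha$ to the first breakpoint to get $z\in P$ with $\varphi(z)<\varphi(y)$ in $\mathcal{O}(TC)$ time. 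So the core of your argument is the same as the paper's proof.

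Where you go beyond the paper is the cancellation issue, and your patch is only partially successful. Cancellation occurs exactly at a time $t_j$ with $c_{j-1}=c_j$ (then $\alpha_{j-1}=\alpha_j$ and the two contributions at $(c_j,t_j)$ cancel), and your excision of such a pair, using transitivity of $\sim_{c}$, does preserve \ref{itm:Z} and \ref{itm:Ti}, so minimality rules out cancellation whenever $k\ge 3$. But the argument bottoms out at $k=2$ with $c_1=c_2$: there $d\equiv 0$, no shorter admissible sequence exists, and in fact the lemma as stated fails in that configuration (take $C=1$, $T=2$, $E_1=1$, $y_{1,1}=y_{1,2}=1/2$ with $\hat{x}_{1,2}$ forced to equal $1$; the equalities $\breve{x}_t=\breve{y}_t$ then force $P=\{y\}$, so $y$ is a vertex even though \ref{itm:Z} and \ref{itm:Ti} hold). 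The paper's proof is silent on this case as well — its assignments to $z_{c_i,t_i}(\alpha)$ would conflict there — and the gap is harmless only because in the sole application (Lemma \ref{lem:forest}) consecutive vertices $(c_i,W_i),(c_{i+1},W_{i+1})$ on a cycle of $G(y)$ cannot share a converter, since $W_i\neq W_{i+1}$ would be disjoint equivalence classes both containing $t_{i+1}$; hence no cancellation ever occurs there. In short, your proof is as sound as the paper's, and your minimality device is a sensible, though not fully closing, repair of a subtlety the paper glosses over; a complete fix is to add the (harmless) hypothesis that the $2k$ perturbed coordinates, or at least consecutive converters, are distinct.
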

\begin{proof}
We define a line segment of points $z(\alpha)$ parametrized by $\alpha$ as follows
\begin{itemize}
\item $z_{c_i,t_i}(\alpha) = y_{c_i,t_i} + \frac{\alpha}{E_{c_i}}$
\item $z_{c_i,t_{i+1}}(\alpha) = y_{c_i,t_{i+1}} - \frac{\alpha}{E_{c_i}}$
\end{itemize}
for every $i \in \set{1,\ldots,k}$ and $z_{c,t}(\alpha) = y_{c,t}$ otherwise. For every $\alpha$ the solution $z(\alpha)$ satisfies $\breve{z}_t(\alpha) = \breve{y}_t$. Furthermore, $\hat{z}_{c,t}(0) = \hat{y}_{c,t}$ and $z_{c,t}(\alpha)$ is linear in $\alpha$. Hence, for some sufficiently small $\epsilon > 0$ inequalities
\begin{center}\begin{tabular}[h]{ll}
$0 \le z_{c,t}(\alpha) \le 1$ &
\multirow{2}{*}{} \\
$\floor{\hat{y}_{c,t}} \le \hat{z}_{c,t}(\alpha) \le \ceil{\hat{y}_{c,t}}$
\end{tabular}\end{center}
hold for every $t \in \cal T$, $c \in \cal C$ and $-\epsilon \le \alpha \le \epsilon$. Hence, whole line segment of points $z(\alpha)$ for $\alpha \in \langle -\epsilon, \epsilon \rangle$ belongs into the polytope $P$ which proves that $y$ is not a vertex of $P$.

Consider the maximal $\alpha$ such that $z(\alpha) \in P$ and let $z = z(\alpha)$ which can be found in time $\mathcal{O}(TC)$. Observe that maximality of $\alpha$ implies $\varphi(z) < \varphi(y)$.
\end{proof}

Now, we define a bipartite graph $G(y)$ for a solution $y$ as follows. One partite of vertices consists of the set of time intervals $\cal T$ and the other partite is the set of pairs $(c,W)$ where $c \in \cal C$ and $W \in S_c$. Time interval $t$ and a pair $(c,W)$ are connected by an edge if $t \in W$. Note that the graph $G(y)$ has exactly one edge for every non-integral value in $y$. The number of vertices of $G(y)$ is at most $(C+1)T$ and the number of edges of $G(y)$ is $\sum_{c \in \cal C} T_c \le CT$.

\begin{lemma}\label{lem:forest}
If $G(y)$ is not a forest, then there are sequences of converters $c_1, \ldots, c_k$ and time intervals $t_1, \ldots, t_k$ satisfying assumptions of Lemma \ref{lem:vertex} which can be found in time $\mathcal{O}(TC)$.
\end{lemma}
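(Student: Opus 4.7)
The plan is to read off the required sequences directly from a simple cycle in $G(y)$, whose existence is guaranteed by the hypothesis that $G(y)$ is not a forest.

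First, I would recall that $G(y)$ has $\mathcal{O}(TC)$ vertices (at most $T$ time intervals plus at most $CT$ pairs $(c,W)$) and $\mathcal{O}(TC)$ edges (one per non-integral $y_{c,t}$), so a depth-first search locates a simple cycle in time $\mathcal{O}(TC)$, matching the claimed running time. Because $G(y)$ is bipartite, any such cycle has even length $2k$ with $k \ge 2$, and it alternates between the two partites. Write the cycle as
$$t_1,\, (c_1,W_1),\, t_2,\, (c_2,W_2),\, \ldots,\, t_k,\, (c_k,W_k),\, t_{k+1}=t_1.$$
The time intervals $t_1, \ldots, t_k$ are pairwise distinct because the cycle is simple, which is exactly the distinctness requirement in Lemma~\ref{lem:vertex}.

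Next, I would verify conditions \ref{itm:Z} and \ref{itm:Ti} using only the definition of $G(y)$. For each $i$, the edges $\{t_i,(c_i,W_i)\}$ and $\{t_{i+1},(c_i,W_i)\}$ of the cycle imply $t_i, t_{i+1} \in W_i \subseteq T_{c_i}$, so both $y_{c_i,t_i}$ and $y_{c_i,t_{i+1}}$ are non-integral, giving \ref{itm:Z}. Moreover, $t_i$ and $t_{i+1}$ lie in the same equivalence class $W_i$ of $\sim_{c_i}$, which is exactly $t_i \sim_{c_i} t_{i+1}$, i.e.\ condition \ref{itm:Ti}.

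I do not expect a serious obstacle here: the lemma is essentially a translation from graph-theoretic language (a cycle in $G(y)$) to the combinatorial condition used by Lemma~\ref{lem:vertex}. The only small points to be careful about are (i) that the bipartiteness forces the cycle to alternate so the $c_i$'s and $t_i$'s are separated as required (and nothing prevents $c_i = c_j$ for $i \ne j$, which is allowed by Lemma~\ref{lem:vertex} since only the time intervals must be distinct); and (ii) that the cycle must be simple so that the $t_i$ are distinct — both are automatic from the standard output of DFS-based cycle detection. Together with Lemma~\ref{lem:vertex}, this then provides a procedure that either certifies $y$ is a vertex of $P$ (when $G(y)$ is a forest) or strictly decreases $\varphi(y)$ in time $\mathcal{O}(TC)$.
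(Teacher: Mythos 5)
Your proposal is correct and follows essentially the same route as the paper: take a (simple) cycle in the bipartite graph $G(y)$, write it in alternating form $t_1,(c_1,W_1),\ldots,t_k,(c_k,W_k),t_{k+1}=t_1$, and observe that $t_i,t_{i+1}\in W_i\subseteq T_{c_i}$ yields \ref{itm:Z} and \ref{itm:Ti}, with linear-time cycle detection giving the $\mathcal{O}(TC)$ bound. Your version is, if anything, slightly more explicit than the paper's (distinctness of the $t_i$, $k\ge 2$ from bipartiteness, and that repeated converters are harmless), so no gap remains.
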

\begin{proof}
For sake of a contradiction, let us assume that $G(y)$ has a cycle on vertices
$$t_1,(c_1,W_1),t_2,(c_2,W_2),\ldots,t_k,(c_k,W_k), t_{k+1}=t_1.$$
Since $t_i, t_{i+1} \in W_i \subseteq T_{c_i}$, conditions \ref{itm:Z} and \ref{itm:Ti} hold, so sequences $t_1,\ldots,t_k$ and $c_1,\ldots,c_k$ satisfy conditions of Lemma \ref{lem:vertex}. This contradicts the assumption that $y$ is a vertex of $P$. A cycle in a graph can be found in time linear in the size of the graph.
\end{proof}

Interested reader may observe that $G(y)$ is a forest if and only if $y$ is a vertex of $P$. However for our purposes, only implication in Lemma \ref{lem:forest} is important.

\begin{lemma}\label{lem:z}
We can find a solution $z \in P$ such the graph $G(z)$ is a forest in time $\mathcal{O}(T^2C^2)$.
\end{lemma}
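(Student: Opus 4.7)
The plan is to iterate the reduction from Lemma~\ref{lem:vertex} (invoked via Lemma~\ref{lem:forest}) starting from $y$, and to argue termination via the integer potential $\varphi$.

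Concretely, I would maintain a current solution $z \in P$, initialized to $z \leftarrow y$. At each step I check whether $G(z)$ is a forest. If it is, we are done. Otherwise, Lemma~\ref{lem:forest} applied to $z$ produces in time $\mathcal{O}(TC)$ a cyclic sequence of converters $c_1,\dots,c_k$ and time intervals $t_1,\dots,t_k$ satisfying hypotheses \ref{itm:Z} and \ref{itm:Ti}. Feeding this sequence into Lemma~\ref{lem:vertex} yields, again in time $\mathcal{O}(TC)$, a new solution $z' \in P$ with $\varphi(z') < \varphi(z)$. We then replace $z$ by $z'$ and repeat.

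Termination follows from the fact that $\varphi$ takes nonnegative integer values and strictly decreases at every iteration: initially $\varphi(z) = \varphi(y) \le 2CT$ (since each pair $(c,t) \in {\cal C} \times {\cal T}$ contributes at most $1$ to each of the two summands in the definition of $\varphi$), so the loop stops after at most $2CT$ iterations. Each iteration performs the cycle detection of Lemma~\ref{lem:forest} and the line-segment update of Lemma~\ref{lem:vertex}, both of which cost $\mathcal{O}(TC)$; the total running time is therefore $\mathcal{O}(T^2C^2)$, as claimed. Correctness is immediate: when the loop terminates, $G(z)$ is a forest, and the invariant $z \in P$ is preserved by Lemma~\ref{lem:vertex}.

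The only delicate point is making sure the potential function really drops at each step. This is guaranteed by Lemma~\ref{lem:vertex}, which chooses the parameter $\alpha$ maximally so that at least one additional coordinate $z_{c,t}$ becomes integral (or an additional cumulative sum $\hat{z}_{c,t}$ hits one of its bounds $\lfloor \hat{y}_{c,t} \rfloor$ or $\lceil \hat{y}_{c,t} \rceil$), while no previously integral quantity can become non-integral since the perturbation is linear and $\epsilon$-small at the starting point; hence $\varphi$ strictly decreases, and since it is an integer it must drop by at least one, which is exactly what the iteration count needs.
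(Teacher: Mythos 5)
Your proposal is correct and follows essentially the same route as the paper: initialize $z \leftarrow y$, repeatedly use Lemma \ref{lem:forest} to extract a cycle and Lemma \ref{lem:vertex} to strictly decrease the integer potential $\varphi(z) \le 2CT$, giving at most $2CT$ iterations of cost $\mathcal{O}(TC)$ each, hence $\mathcal{O}(T^2C^2)$ total. Your closing remark about why the maximal choice of $\alpha$ forces $\varphi$ to drop is exactly the observation the paper relies on inside Lemma \ref{lem:vertex}.
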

\begin{proof}
First, we consider the optimal relaxed solution $y$ to be an initial solution $z$. Note that $\varphi(z) \le 2CT$. We repeatedly use Lemma \ref{lem:vertex} and Lemma \ref{lem:forest} until $G(z)$ is a forest. In every step the potential $\varphi(z)$ decreases by at least one and it terminates after at most $2CT$ steps since $G(z)$ is a forest if $\varphi(z) = 0$. Time complexity of one step is $\mathcal{O}(TC)$ so the complexity of finding $z \in P$ such the graph $G(z)$ is a forest is $\mathcal{O}(T^2C^2)$.
\end{proof}

\section{Rounding order}\label{sec:order}

In this section we consider a solution $z$ provided by Lemma \ref{lem:z} and we construct a sequence $(c_1,W_1),\ldots,(c_k,W_k)$ of all vertices of the second partite of $G(z)$ determining the order in which non-integer values of the solution $z$ are rounded. Next section rounds all non-integer values $z_{c_i,t}$ for $t \in W_i$ sequentially for $i = 1, \ldots, k$. The construction of the order is described in the following lemma. To simplify the notation, let $G$ be the graph $G(z)$.

\begin{lemma}\label{lem:order}
There is a sequence $(c_1,W_1),\ldots,(c_k,W_k)$ of all vertices of the second partite of $G$ such that for every $i$ vertex $(c_i,W_i)$ has at most one non-leaf neighbour $t_i$ in the graph
$$G_i = G \setminus \set{(c_{i+1},W_{i+1}), \ldots, (c_k,W_k)}.$$
The sequence can be found in time $\mathcal{O}(TC)$.
\end{lemma}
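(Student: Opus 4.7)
My plan is to construct the sequence in reverse order, picking $(c_k,W_k)$, then $(c_{k-1},W_{k-1})$, \dots, $(c_1,W_1)$ by a peeling process on $G=G(z)$. Concretely, I maintain a subgraph $G'$ of $G$ that initially equals $G$; at each step I look for a second-partite vertex $v\in G'$ with at most one non-leaf neighbour in $G'$, assign it the largest currently-unassigned index, and delete $v$ from $G'$. Since $G_i$ is exactly the graph $G'$ at the moment the vertex labelled $i$ is chosen, this construction automatically satisfies the requirement of the lemma.

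The main technical content is to show that such a ``peelable'' second-partite vertex always exists while $G'$ still contains a second-partite vertex. I would prove this by restricting to the auxiliary subgraph $H\subseteq G'$ obtained by deleting every time interval whose degree in $G'$ is at most $1$. Because $G$ is a forest by Lemma~\ref{lem:z}, so is $G'$, and hence also $H$. Every time interval that survives in $H$ has degree $\geq 2$ there, so no time interval can be a leaf or an isolated vertex of $H$; every leaf (or isolated vertex) of $H$ is therefore a second-partite vertex. A leaf of $H$ has precisely one incident edge to a time interval of $G'$-degree $\geq 2$, hence exactly one non-leaf neighbour in $G'$; an isolated vertex of $H$ has none. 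As soon as $G'$ still contains any second-partite vertex, $H$ is nonempty and contains such a peelable vertex.

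For the $\mathcal{O}(TC)$ bound, I would precompute $d(t)=\deg_{G'}(t)$ for all time intervals and, for every second-partite vertex $v$, the counter $\iota(v)=|\{t\in N(v):d(t)\geq 2\}|$, and keep a queue of vertices with $\iota(v)\leq 1$. Removing a vertex $v$ requires scanning its $\deg v$ incident edges, decreasing $d(t)$ along each, and whenever $d(t)$ drops from $2$ to $1$, decrementing $\iota$ at the unique remaining second-partite neighbour of $t$ (enqueuing it if its counter becomes $\leq 1$). Summed over all peel steps this is $\sum_v \deg v = |E(G)| = \mathcal{O}(TC)$, matching the claimed bound; the number of edges is $\mathcal{O}(TC)$ because each non-integer entry of $z$ contributes exactly one edge.

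The only real obstacle is the existence argument in the second paragraph: a priori the peeling could get stuck if every remaining second-partite vertex had two or more non-leaf neighbours. Passing to the auxiliary forest $H$ is exactly what rules this out, because any nonempty forest in which all vertices of one colour class have degree $\geq 2$ must contain a vertex of the other colour class of degree $\leq 1$. The rest of the proof is routine bookkeeping.
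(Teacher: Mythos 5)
Your proof is correct and follows essentially the same route as the paper: the sequence is built in reverse by repeatedly peeling a second-partite vertex, whose existence is shown via the auxiliary forest obtained by discarding time-interval vertices of degree at most one (the paper discards the edges at degree-one time intervals instead, which is equivalent), and the $\mathcal{O}(TC)$ bound comes from the same degree-counter bookkeeping the paper sketches. No gaps to report.
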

\begin{proof}
The sequence is constructed from the end. Therefore, $G_k = G$ and $G_i$ is the graph $G_{i+1}$ without vertex $(c_{i+1},W_{i+1})$. The vertex $(c_i,W_i)$ of $G_i$ is determined in the following way.

Let $G'$ be the graph $G_i$ without edges joining time interval vertices of degree $1$. Since graph $G'$ is also a forest and no time interval vertex is a leaf, the graph $G'$ has a vertex $(c_i,W_i)$ of degree at most 1 and we denote its neighbour by $t_i$ if it exists. In the graph $G_i$, the vertex $(c_i,W_i)$ has at most one neighbour $t_i$ which is not a leaf.

The above construction can be implemented in time which is linear in the size of graph $G$ as follows. The algorithm modifies the graph $G$ so that it contains no vertex corresponding to a time interval of degree at most 1 and it keeps a list of all vertices $(c,W)$ of degree at most 1. An initialization phase can easily ensure these two invariants. Then in every step, some vertex $(c,W)$ is removed from the list and also from the graph, and all neighbours of $(c,W)$ are check to ensure that both invariants are still satisfied.
\end{proof}

\section{Rounding rules} \label{sec:rounding}

This section shows how non-integer values of a vertex $z$ of $P$ are rounded using the order created in the previous section. The rounding rules are summarized in Algorithm \ref{alg:rounding}.

\begin{algorithm}[ht]
\caption{Rounding algorithm. \label{alg:rounding}}
Find a vertex $z$ of $P$ which minimizes a given objective function \eqref{eq:maxpeak}, \eqref{eq:maxabs} or \eqref{eq:maxfluctuation}.\;
Find a sequence $(c_1,W_1),\ldots,(c_k,W_k)$ according to Lemma \ref{lem:order}.\;
$x := z$\;
\For{$i = 1, \ldots, k$}
{
	\tcp{Round values $x_{c_i,t}$ for all $t \in W_i$.}
	\tcp{To simplify the algorithm, assume that $z_{c,0} = 0$ for every $c \in \cal C$.}
	\If{Vertex $t_i$ of Lemma \ref{lem:order} does not exist}{Choice arbitrary $t_i \in W_i$}
	\eIf{$E_{c_i} (\breve{z}_{t_i} - \breve{x}_{t_i}) \ge 0$}
	{
		\For{$t \in W_i$ and $t < t_i$}{$x_{c_i,t} := \floor{\hat{z}_{c_i,t}} - \floor{\hat{z}_{c_i,t-1}}$}
		$x_{c_i,t_i} := 1$\;
		\eIf{$\floor{\hat{z}_{c_i,t_i-1}} = \floor{\hat{z}_{c_i,t_i}}$}
		{\For{$t \in W_i$ and $t > t_i$}{$x_{c_i,t} := \ceil{\hat{z}_{c_i,t}} - \ceil{\hat{z}_{c_i,t-1}}$}}
		{\For{$t \in W_i$ and $t > t_i$}{$x_{c_i,t} := \floor{\hat{z}_{c_i,t}} - \floor{\hat{z}_{c_i,t-1}}$}}
	}{
		\For{$t \in W_i$ and $t < t_i$}{$x_{c_i,t} := \ceil{\hat{z}_{c_i,t}} - \ceil{\hat{z}_{c_i,t-1}}$}
		$x_{c_i,t_i} := 0$\;
		\eIf{$\floor{\hat{z}_{c_i,t_i-1}} = \floor{\hat{z}_{c_i,t_i}}$}
		{\For{$t \in W_i$ and $t > t_i$}{$x_{c_i,t} := \floor{\hat{z}_{c_i,t}} - \floor{\hat{z}_{c_i,t-1}}$}}
		{\For{$t \in W_i$ and $t > t_i$}{$x_{c_i,t} := \ceil{\hat{z}_{c_i,t}} - \ceil{\hat{z}_{c_i,t-1}}$}}
	}
}
\end{algorithm}

The proof of correctness of rounding rules is split into Lemmas \ref{lem:integer}, \ref{lem:feasible} and \ref{lem:approx} proving conditions \eqref{eq:converter}, \eqref{eq:bounds} and \eqref{eq:approx}, respectively.

\begin{lemma}\label{lem:integer}
The solution $x$ found by Algorithm \ref{alg:rounding} is binary.
\end{lemma}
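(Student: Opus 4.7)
The goal is to verify that every coordinate of the final $x$ lies in $\set{0,1}$. I would split the coordinates into those that are never overwritten by the loop of Algorithm \ref{alg:rounding} and those that are. Since $x$ is initialised to $z$ and the loop rewrites $x_{c_i,t}$ only for $t \in W_i$, a coordinate $(c,t)$ is untouched iff $t$ lies outside every class $W \in S_c$, i.e.\ iff $t \notin T_c$. The definition of $T_c$ then gives $z_{c,t} \in \Z$, and together with $0 \le z_{c,t} \le 1$ from \eqref{eq:relaxed} this forces $z_{c,t} \in \set{0,1}$, so every untouched coordinate is already binary. (Lemma \ref{lem:order} ensures that every pair $(c,W)$ with $W\in S_c$ appears in the sequence, so no non-integer entry of $z$ is missed.)

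For overwritten coordinates, the direct assignments $x_{c_i,t_i}:=0$ and $x_{c_i,t_i}:=1$ are trivially binary, and every other assignment produced by Algorithm \ref{alg:rounding} has one of the two forms
\[
x_{c_i,t} \;:=\; \floor{\hat{z}_{c_i,t}} - \floor{\hat{z}_{c_i,t-1}}
\quad\text{or}\quad
x_{c_i,t} \;:=\; \ceil{\hat{z}_{c_i,t}} - \ceil{\hat{z}_{c_i,t-1}},
\]
using the convention $\hat{z}_{c,0}=0$. The elementary fact I would invoke is that whenever $0 \le b-a \le 1$ we have both $\floor{b}-\floor{a} \in \set{0,1}$ and $\ceil{b}-\ceil{a} \in \set{0,1}$ (a two-line case check: monotonicity gives a nonnegative difference, and $b \le a+1$ prevents the difference from reaching $2$). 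Applying this with $a=\hat{z}_{c_i,t-1}$ and $b=\hat{z}_{c_i,t}$ yields the required conclusion because the increment $b-a = z_{c_i,t}$ lies in $[0,1]$ by \eqref{eq:relaxed}.

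I expect no real obstacle here. The only point worth a sentence of commentary is the boundary case $t = \min W_i$ (and analogously $t=1$, handled by the convention $\hat{z}_{c,0}=0$): here $t-1$ need not lie in $W_i$ and $\hat{z}_{c_i,t-1}$ may happen to be an integer. This causes no trouble because the elementary observation above depends only on the increment $b-a \in [0,1]$ and not on the fractional parts of $a$ or $b$ themselves. Consequently the argument carries through uniformly over every overwritten coordinate, establishing Lemma \ref{lem:integer}.
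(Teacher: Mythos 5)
Your proof is correct and follows essentially the same route as the paper: untouched entries of $x$ are integral values of $z$ in $[0,1]$ and hence binary, while each entry indexed by some class $W_i$ is overwritten exactly once with a value in $\set{0,1}$. The only difference is that you spell out why $\floor{\hat{z}_{c_i,t}} - \floor{\hat{z}_{c_i,t-1}}$ and $\ceil{\hat{z}_{c_i,t}} - \ceil{\hat{z}_{c_i,t-1}}$ lie in $\set{0,1}$ (via $0 \le z_{c_i,t} \le 1$), a point the paper's proof simply asserts.
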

\begin{proof}
Observe that in one iteration of the main for loop in Algorithm \ref{alg:rounding} there is exactly one assignment to a variable $x_{c_i,t}$ for every $t \in W_i$ and the assigned value is 0 or 1. Since for every non-integer value $z_{c,t}$ there exists exactly one equivalence class $W \in T_c$ containing $t$, every non-integral value $z_{c,t}$ has assigned value 0 or 1 exactly once.
\end{proof}

\begin{lemma}\label{lem:feasible}
The solution $x$ found by Algorithm \ref{alg:rounding} satisfies $\floor{\hat{z}_{c,t}} \le \hat{x}_{c,t} \le \ceil{\hat{z}_{c,t}}$ for every $c \in {\cal C}$ and $t \in \cal T$.
\end{lemma}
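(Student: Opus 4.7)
The plan is to prove the inequality class by class. Because the sets $W_i$ for a fixed converter $c$ are pairwise disjoint and the $i$-th iteration only modifies $x_{c_i, t}$ for $t \in W_i$, it suffices to analyse a single equivalence class $W = \set{\tau_1 < \cdots < \tau_m} \in S_c$ and show that (a) $\floor{\hat{z}_{c,t}} \le \hat{x}_{c,t} \le \ceil{\hat{z}_{c,t}}$ holds for every $t \in [\tau_1, \tau_m]$ and (b) $\hat{x}_{c, \tau_m} = \hat{z}_{c, \tau_m}$. Property (b) is crucial: it guarantees that later partial sums $\hat{x}_{c,t}$ match $\hat{z}_{c,t}$ exactly outside the class, so subsequent iterations for other classes or converters cannot spoil the bound, and one only needs to verify the claim within each class in isolation.

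First I would record three structural consequences of the definition of $\sim_c$. By maximality of the equivalence class, $\hat{z}_{c, \tau_1 - 1}$ and $\hat{z}_{c, \tau_m}$ are integers (with the convention $\hat{z}_{c,0} = 0$). On each ``plateau'' $[\tau_k, \tau_{k+1} - 1]$ with $1 \le k \le m-1$, every $\hat{z}_{c, t}$ has the same non-integer fractional part, so both $\floor{\hat{z}_{c,t}} - \floor{\hat{z}_{c, \tau_k}}$ and $\ceil{\hat{z}_{c,t}} - \ceil{\hat{z}_{c, \tau_k}}$ equal the integer $\hat{z}_{c,t} - \hat{z}_{c, \tau_k}$. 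Finally, any $t \in T_c$ with $\tau_1 < t < \tau_m$ already lies in $W$, so for $t \in [\tau_1, \tau_m] \setminus W$ the entry $x_{c,t} = z_{c,t}$ is integer and is never touched by any iteration of the main loop.

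The core of the argument is then a telescoping computation. Focus on Case~1 of the algorithm ($E_{c_i}(\breve{z}_{t_i} - \breve{x}_{t_i}) \ge 0$) with pivot $t_i = \tau_j$; Case~2 is symmetric. Using the plateau identity, an induction on $k$ gives $\hat{x}_{c, \tau_k} = \floor{\hat{z}_{c, \tau_k}}$ for all $k < j$. At the pivot one then has $\hat{x}_{c, \tau_j} = \floor{\hat{z}_{c, \tau_j - 1}} + 1 = \ceil{\hat{z}_{c, \tau_j - 1}}$, and a short calculation using $z_{c, \tau_j} \in (0,1)$ shows this quantity equals $\ceil{\hat{z}_{c, \tau_j}}$ precisely when $\floor{\hat{z}_{c, \tau_j - 1}} = \floor{\hat{z}_{c, \tau_j}}$, and equals $\floor{\hat{z}_{c, \tau_j}}$ otherwise. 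The algorithm's two post-pivot branches (ceiling-differences versus floor-differences) are tailored so that a second telescoping induction continues this identity, producing $\hat{x}_{c, \tau_k} = \ceil{\hat{z}_{c, \tau_k}}$ in the first branch and $\hat{x}_{c, \tau_k} = \floor{\hat{z}_{c, \tau_k}}$ in the second for all $k \ge j$. In either branch $\hat{x}_{c, \tau_m} = \hat{z}_{c, \tau_m}$ because $\hat{z}_{c, \tau_m} \in \Z$ collapses its floor and ceiling. For any intermediate $t$ strictly between two consecutive elements of $W$, $x_{c,t} = z_{c,t}$ is integer and the plateau identity propagates the value to $\hat{x}_{c, t} \in \set{\floor{\hat{z}_{c,t}}, \ceil{\hat{z}_{c,t}}}$, which is what is claimed.

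The main obstacle I expect is the bookkeeping of boundary situations ($j = 1$, $j = m$ and $\tau_1 = 1$, where the convention $\hat{z}_{c, 0} = 0$ is invoked) and, more importantly, verifying that the sub-split inside each of the algorithm's pivot cases is configured so that exactly the right post-pivot branch is picked. Once the matching between the two algorithmic sub-branches and the two possibilities for $\hat{x}_{c, \tau_j}$ is checked, the rest of the argument is just mechanical telescoping.
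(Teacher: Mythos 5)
Your overall strategy is the same as the paper's: process one equivalence class at a time, telescope floor-differences before the pivot and ceiling- or floor-differences after it according to the sub-branch, and use integrality of the cumulative sum at the left boundary of the class (with the convention $\hat{z}_{c,0}=0$). However, one of your structural claims is false, and it is precisely the point you call ``crucial''. Maximality of the class gives integrality only at the \emph{left} end: the backward induction is anchored at $\hat{z}_{c,0}=0\in\Z$, but there is no anchor at the right end of the horizon, so $\hat{z}_{c,\tau_m}$ need \emph{not} be an integer. Nothing in the definition of $P$ forces the totals $\hat{y}_{c,T}$ (hence $\hat{z}_{c,T}$) to be integral; for instance with $C=T=1$ and $y_{1,1}=\tfrac12$ the only class is $W=\set{1}$ and $\hat{z}_{1,1}=\tfrac12\notin\Z$. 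In general, whenever $\hat{z}_{c,T}\notin\Z$ and $W$ is the time-wise last class of converter $c$, one has $\hat{z}_{c,\tau_m}\notin\Z$ while $\hat{x}_{c,\tau_m}$ is an integer, so your property (b), $\hat{x}_{c,\tau_m}=\hat{z}_{c,\tau_m}$, cannot hold, and your claim (a) only covers $t\in[\tau_1,\tau_m]$, leaving the bound for $t>\tau_m$ unverified in exactly this case.

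The gap is local and is repaired the way the paper does it: $\hat{z}_{c,\tau_m}\in\Z$ \emph{unless} $\hat{z}_{c,T}\notin\Z$ and $\tau_m$ is the last time interval with $z_{c,t}\notin\Z$; in that exceptional case every $z_{c,t}$ with $t>\tau_m$ is an integer, so $\hat{x}_{c,t}-\hat{z}_{c,t}$ is constant for $t\ge\tau_m$, and since $\hat{x}_{c,\tau_m}\in\set{\floor{\hat{z}_{c,\tau_m}},\ceil{\hat{z}_{c,\tau_m}}}$ and the fractional part of $\hat{z}_{c,t}$ is the same for all $t\ge\tau_m$, one still gets $\hat{x}_{c,t}\in\set{\floor{\hat{z}_{c,t}},\ceil{\hat{z}_{c,t}}}$ there. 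Note also that the exceptional class is necessarily the last class of its converter, so every earlier-in-time class does end on an integer cumulative sum; hence your use of (b) to guarantee $\hat{x}_{c,\tau_1-1}=\hat{z}_{c,\tau_1-1}$ when a later class is processed remains valid. With this exception handled, the rest of your telescoping argument coincides with the paper's proof.
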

\begin{proof}
When Algorithm \ref{alg:rounding} starts, the condition \eqref{eq:bounds} holds since $x = z$ and we prove by induction on $i$ that \eqref{eq:bounds} holds after every iteration. Let $t^f$ and $t^l$ be the first and the last time intervals of $W_i$, respectively. Note that $\hat{z}_{c_i,t^f-1} \in \Z$ (assuming that $\hat{z}_{c_i,0} = 0$). Furthermore, $\hat{z}_{c_i,t^l} \in \Z$ unless $\hat{z}_{c_i,T} \notin \Z$ and $t^l$ is the last time interval $t$ with non-integer value in $z_{c_i,t}$.

We restrict our attention on the case $E_{c_i} (\breve{z}_{t_i} - \breve{x}_{t_i}) \ge 0$ since the opposite case is similar. Since $\hat{z}_{c_i,t^f-1} \in \Z$ it follows that $\hat{z}_{c_i,t^f-1} = \hat{x}_{c_i,t^f-1}$. For $t \in T(t^f, t_i)$, we can easily observe by induction on $t$ that $\hat{x}_{c_i,t} = \floor{\hat{z}_{c_i,t}}$. After setting $x_{c_i,t_i} := 1$, it holds that $\hat{x}_{c_i,t_i} = \ceil{\hat{z}_{c_i,t_i}} = \ceil{\hat{z}_{c_i,t_i-1}}$ or $\hat{x}_{c_i,t_i} = \floor{\hat{z}_{c_i,t_i}} = \floor{\hat{z}_{c_i,t_i-1}}+1$. If $\floor{\hat{z}_{c_i,t_i-1}} = \floor{\hat{z}_{c_i,t_i}}$, then for $t \in T(t_i, t^l)$ it also holds that $\hat{x}_{c_i,t} = \ceil{\hat{z}_{c_i,t}}$; otherwise, for $t = T(t_i, t^l)$ it holds that $\hat{x}_{c_i,t} = \floor{\hat{z}_{c_i,t}}$. If $\hat{z}_{c_i,t^l} \in \Z$, then $\hat{z}_{c_i,t^l} = \hat{x}_{c_i,t^l}$ and value $\hat{x}_{c_i,t}$ is unchanged for $t \ge t^l$ in this iteration; otherwise, $\hat{x}_{c_i,t} \in \set{\floor{\hat{z}_{c_i,t}},\ceil{\hat{z}_{c_i,t}}}$ for $t \ge t^l$.
\end{proof}

\begin{lemma}\label{lem:approx}
The solution $x$ found by Algorithm \ref{alg:rounding} satisfies $\left|\breve{z}_t - \breve{x}_t\right| \le E$ for every $t \in \cal T$.
\end{lemma}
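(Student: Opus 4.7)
The approach is to establish the stronger invariant $|\breve{x}_t - \breve{z}_t| \le E$ for every $t \in \cal T$ throughout the execution of Algorithm \ref{alg:rounding}, by induction on the iteration counter $i$. The base case $i = 0$ is immediate since $x$ is initialised to $z$, so the difference is zero.

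For the inductive step, fix a time interval $t$ and distinguish three cases. If $t \notin W_i$, iteration $i$ does not touch $\breve{x}_t$ and the invariant is preserved trivially. If $t \in W_i \setminus \set{t_i}$, then Lemma \ref{lem:order} does the heavy lifting: $(c_i,W_i)$ has at most one non-leaf neighbour in $G_i$, so such a $t$ must be a leaf of $G_i$ whose only $G_i$-neighbour is $(c_i,W_i)$. Equivalently $t \notin W_j$ for any $j < i$, so no earlier iteration has modified $x_{c,t}$ for any $c$. Hence just before iteration $i$ we have $\breve{x}_t = \breve{z}_t$ exactly, and after the iteration the discrepancy equals $|E_{c_i}|\,|x_{c_i,t} - z_{c_i,t}|$, which is strictly smaller than $|E_{c_i}| \le E$ because $0 < z_{c_i,t} < 1$ (as $t \in W_i \subseteq T_{c_i}$) and $x_{c_i,t} \in \set{0,1}$ by Lemma \ref{lem:integer}.

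The delicate case is $t = t_i$. Writing $d := \breve{z}_{t_i} - \breve{x}_{t_i}$ before iteration $i$ (so $|d| \le E$ by the induction hypothesis) and $\eta := E_{c_i}(x_{c_i,t_i}^{\text{new}} - z_{c_i,t_i})$ for the change produced by the iteration, I would verify in each branch of the if-statement that $\eta$ has the same sign as $d$ (vacuous when $d = 0$) and magnitude strictly below $|E_{c_i}| \le E$. Concretely, in the branch $E_{c_i}(\breve{z}_{t_i} - \breve{x}_{t_i}) \ge 0$ the algorithm sets $x_{c_i,t_i} := 1$, giving $\eta = E_{c_i}(1 - z_{c_i,t_i})$ which carries the sign of $E_{c_i}$ and hence of $d$; in the other branch it sets $x_{c_i,t_i} := 0$, giving $\eta = -E_{c_i} z_{c_i,t_i}$ which carries the sign of $-E_{c_i}$ and again of $d$. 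A short case split on whether $|\eta| \le |d|$ or $|\eta| > |d|$ then yields $|d - \eta| \le \max(|d|, |\eta|) \le E$, closing the induction. The principal obstacle I anticipate is coupling the leaf property of $G_i$, which guarantees clean error-free input at all but one time of $W_i$, with the sign analysis at $t_i$, which keeps the one non-clean time from destabilising the accumulated error; once both ingredients are aligned, applying the invariant with $i = k$ yields the claim.
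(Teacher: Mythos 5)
Your proof is correct and takes essentially the same route as the paper's: induction over the rounding iterations, using Lemma \ref{lem:order} to ensure that every $t \in W_i \setminus \set{t_i}$ enters iteration $i$ untouched (so $\breve{x}_t = \breve{z}_t$ there and the new rounding error is below $|E_{c_i}| \le E$), together with a sign analysis of the branch condition at $t_i$. Your explicit bound $|d-\eta| \le \max(|d|,|\eta|) \le E$ just spells out what the paper states more tersely, and deriving the zero-error property of the leaf times directly from $G_i$ instead of carrying it as an auxiliary invariant is an inessential difference.
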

\begin{proof}
We prove by induction on $i$ that the lemma holds after every iteration $i$ and moreover, for every vertex of $G_i$ of degree 0 corresponding to time interval $t$ it holds $\breve{x}_t = \breve{z}_t$. The base of the induction is satisfied since graph $G_0$ has no edge and $x = z$.

In the beginning on an iteration $i$, there is at most one $t_i \in W_i$ such that $\breve{x}_t \neq \breve{z}_t$ and if such $t_i$ exists, then $t_i$ is the only non-leaf neighbour in $G_i$ as stated by Lemma \ref{lem:order}. If $E_{c_i} (\breve{z}_{t_i} - \breve{x}_{t_i}) \ge 0$, then setting $x_{c_i,t_i} := 1$ increases the value $\breve{x}_{t_i}$ by at most $E$; otherwise, setting $x_{c_i,t_i} := 0$ decreases the value $\breve{x}_{t_i}$ by most $E$, so the condition \eqref{eq:approx} remains satisfied in both cases. For $t \in W_i \setminus \set{t_i}$, rounding $x_{c_i,t}$ to 0 or 1 changes the value $\breve{x}_{t_i}$ by most $E$, so the condition \eqref{eq:approx} remains satisfied.
\end{proof}

\begin{proof}[Proof of Theorem \ref{thm:round}]
For a given $y$ Lemma \ref{lem:z} gives us $z \in P$ such that $G(z)$ is a forest and Algorithm \ref{alg:rounding} gives us a solution $x$. By Lemma \ref{lem:integer}, the solution $x$ is binary. Lemmas \ref{lem:z} and \ref{lem:feasible} imply that $\floor{\hat{y}_{c,t}} \le \floor{\hat{z}_{c,t}} \le \hat{x}_{c,t} \le \ceil{\hat{z}_{c,t}} \le \ceil{\hat{y}_{c,t}}$ for every $c \in {\cal C}$ and $t \in \cal T$. Lemmas \ref{lem:z} and \ref{lem:approx} imply that $\left|\breve{y}_t - \breve{x}_t\right| = \left|\breve{z}_t - \breve{x}_t\right| \le E$ for every $t \in \cal T$.

By Lemma \ref{lem:z}, $z$ can be found in time $\mathcal{O}(T^2C^2)$. Lemma \ref{lem:order} gives us the order in $\mathcal{O}(TC)$. All rounding in Algorithm \ref{alg:rounding} can be done in $\mathcal{O}(TC)$. Overall, we can find the solution $x$ in time $\mathcal{O}(T^2C^2)$.
\end{proof}

\section{Conclusion} \label{sec:conclusion}

This paper presents polynomial-time approximation algorithms for four variants of peak shaving problems in a model of scheduling a group of converters. For the basic, the maximal and the absolute peak shaving problems, the absolute approximation error is at most $E$; and for the fluctuation peak shaving problem, the error is at most $2E$, where $E$ is the maximal electricity consumption of a converter.

This paper mainly studies the absolute error between an optimal and an approximated solutions instead of the relative error usually used in literature. The main reason for this choice is the fact that the optimal value of our objective functions may be zero, so the relative error is undefined. Furthermore, it is \NP-complete to determine whether there exists a solution with objective value equal to zero in all studied problems except one: The basic peak shaving with positive values $E_c$ for every $c \in \cal C$. In this problem, our algorithm guarantees the relative error to be at most 2.

It is a question whether in practical scenarios it is better to consider the relative or the absolute error. In case studies consisting of many households with heatpumps of similar power, our results guarantees that the difference between an approximated and an optimal solution is bounded when the number of households is increasing. On the other hand, our approximation error may not be satisfactory if e.g. a large power source is included. In this case, it may be possible to adopt our approach as follows. Our method starts by solving a relaxed linear programming problem and here we relax all variables $x_{c,t}$ except the ones corresponding to the large power source. This linear programming problem has $T$ integer variables and $(C-1)T$ continuous variables, so it should be solvable by current solvers. Then, we apply the rounding algorithm only for control variables corresponding to household heatpumps. From Theorem \ref{thm:round} it follows that the absolute error is at most the maximal electricity consumption of all household heatpumps (not the large power source).

\bibliographystyle{plain}
\bibliography{sg}

\newpage
\section{Appendix}

This appendix contains a table of the important symbols used in this paper. Note that a variable $x_{c,t}$ denotes the control of a converter $c \in \cal C$ in time interval $t \in \cal T$ and a solution $x$ means control of all converter during whole planning horizon (similarly for $y$ and $z$). For simplicity, terms like ``by a converter $c$ in time interval $t$'' are omitted in explanations of symbols like $x_{c,t}$.
\medskip

\begin{tabular}[ht]{cl}
$\mathcal{C}$ & set of heating systems of size $C$ \\
$\mathcal{T}$ & set of time intervals of size $C$ \\
$\Z$ & set of integer numbers \\
$A_{c,t}$ & precomputed lower bound on $\hat{x}_{c,t}$ \\
$B_{c,t}$ & precomputed upper bound on $\hat{x}_{c,t}$ \\
$D_{c,t}$ & heat demand from the heating system \\
$E_c$ & electricity consumed by a running converter $c$ \\
$E$ & $= \max_{c \in \cal C} |E_c|$ \\
$F_t$ & base electricity load \\
$G$ & bipartite graph which connects equivalence classes $S_c$ and time intervals of $S_c$ \\
$H_c$ & heat produced by a running converter $c$ \\
$L_{c,t}$ & lower bound on the state of charge of buffer \\
$P$ & polytope $\set{x : \text{ \eqref{eq:sum} and \eqref{eq:relaxed} hold, and $\breve{x}_t = \breve{y}_t$ holds for every $t \in \cal T$}}$ \\
$S_c$ & set of equivalence classes on $T_c$ by relation $\sim_c$ \\
$T$ & number of time intervals \\
$T(t_1,t_2)$ & $= \set{t_1, t_1+1, \ldots, t_2-1}$ if $t_1 \le t_2$ and $\set{t_2, t_2+1, \ldots, t_1-1}$ otherwise \\
$T_c$ & $= \set{t \in {\cal T}: y_{c,t} \notin \Z}$ \\
$U_{c,t}$ & upper bound on the state of charge of buffer \\
$W,W_i$ & one equivalence class of $S_c$ for some converter $c$ \\
$c,c_i$ & indexes of a converter \\
$i,j$ & index of locally defined meaning \\
$m$ & objective function; bound on $\breve{x}_t$ \\
$m^O$ & optimal value of objective function \\
$m^A$ & value of objective function of the approximated solution found by our algorithm \\
$m_l$,$m_u$ & lower and upper bounds on $\breve{x}_t$ in the fluctuation peak shaving problem \\
$s_{c,t}$ & state of charge of buffer $c$ in the beginning of time interval $t$ \\
$t,t_i$ & indexes of a time interval \\
$x_{c,t}$ & operational state of the converter \\
$\hat{x}_{c,t}$ & $= \sum_{i=1}^t x_{c,i}$; similarly for $\hat{y}_{c,t}$ and $\hat{z}_{c,t}$ \\
$\breve{x}_t$ & $= F_t + \sum_{c \in \cal C} E_c x_{c,t}$; similarly for $\breve{y}_t$ and $\breve{z}_t$ \\
$y$ & optimal solution of the relaxed problem \\
$z(\alpha)$ & point of a line segment parametrized by $\alpha$ belonging into $P$ for $-\epsilon \le \alpha \le \epsilon$ \\
$z$ & a vertex of $P$ \\
$\sim_c$ & relation on $T_c$ such that $t_1 \sim_c t_2$ if and only if $\hat{y}_{c,t} \notin \Z$ for every $t = T(t_1, t_2)$ \\
$(c,W)$ & vertex of the graph $G$; $W$ is equivalence class of $S_c$ for converter $c$ \\
$\floor{a}$ & largest integer value not greater than argument \\
$\ceil{a}$ & smallest integer value not smaller than argument \\
\end{tabular}

\end{document}